\documentclass{article}
\usepackage[utf8]{inputenc}
\usepackage[letterpaper, margin=1.0in]{geometry}
\usepackage{hyperref}
\usepackage{amsmath}
\usepackage{amsthm}
\usepackage{amsfonts}
\usepackage{amssymb}
\usepackage{braket}
\usepackage{graphicx}
\usepackage{physics}
\usepackage{authblk}

\newtheorem{theorem}{Theorem}[section]

\newtheorem{definition}{Definition}[section]


\newcommand{\eps}{\varepsilon}

\begin{document}

\title{Pretty Good Bounds on the worst-case Pretty Good Measurement}

\author{
  Sergio Escobar\thanks{UC Berkeley, Email: sesco99@berkeley.edu}
 \qquad
    Austin Pechan\thanks{UC Berkeley, Email: agp@berkeley.edu}
}
\date{}
\maketitle


\begin{abstract}
We derive a new lower bound on the success probability of the Pretty Good Measurement (PGM) for worst-case quantum state discrimination among $m$ pure states. Our bound is strictly tighter than the previously known Gram-matrix-based bound for $m\geq 4$. The proof adapts techniques from Barnum and Knill's analysis of the average-case PGM, applied here to the worst-case scenario. By comparing the PGM to the sequential measurement algorithm, we obtain a guarantee showing that, in the low-fidelity regime, the PGM's success probability decreases quadratically with respect to the maximum pairwise overlap, rather than linearly as in earlier bounds.
\end{abstract}

\section{Introduction}

The quantum state discrimination problem is a cornerstone of quantum information science, with profound implications for quantum communication, computing, and cryptography. Specifically in this paper, we consider the following problem: given a single copy of an unknown pure state $\ket{\psi}$ known to be one of $m$ possible pure states $\ket{v_1}, \ldots, \ket{v_m}$, the goal is to determine the correct index $i$ such that $\ket{\psi} = \ket{v_i}$. Unlike classical states, which can always be distinguished with certainty, quantum states may overlap non-trivially, making perfect discrimination impossible. This overlap reflects the probabilistic nature of quantum mechanics and highlights a fundamental departure from classical information theory.

The importance of the quantum state discrimination problem extends beyond its theoretical elegance; it underpins the reliability of quantum technologies. In quantum communication, accurate state identification ensures secure and efficient information transfer. In quantum cryptography, the impossibility of perfectly distinguishing non-orthogonal states is a cornerstone of protocol security. In quantum computing and information processing, state discrimination underlies algorithm output measurement, error syndrome extraction, and verification tasks. A detailed treatment of these connections can be found in Watrous's \textit{The Theory of Quantum Information}~\cite{watrous2018theory}.

Recent research has concentrated on establishing theoretical limits for quantum state discrimination, particularly in the average case (also known as the Bayesian case), where the state $\ket{\psi} = \ket{v_k}$ is prepared with some prior probability $p_k$. However, understanding the worst-case scenario, where the goal is to minimize the probability of failure across all possible states, is also valuable, especially for applications in adversarial environments or error-prone quantum devices. In this paper, we contribute to this ongoing exploration by refining lower bounds for the Pretty Good Measurement (PGM) in the worst-case quantum state discrimination problem. By building on established proof techniques and by leveraging known lower bounds for the sequential measurement algorithm (SMA), we derive a tighter lower bound on the PGM success probability.

The lower bound success probability we achieve for the PGM is worse than the sequential measurement algorithm, but in practice, the PGM is still very favorable. The SMA requires you to probe the same quantum state in up to $m$ distinct measurement rounds, each round demanding the state remain coherent while the measurement device is recalibrated. In contrast, for the PGM you implement a single, fixed POVM and then read out once. That single interaction slashes coherence time demands and eliminates run-to-run reconfiguration of the measurement apparatus, making the PGM better for noisy quantum devices. 
\section{Background}
\subsection{Pretty Good Measurement}
The worst-case, pure state, PGM introduced by Montanaro~\cite{montanaro2019prettysimpleboundsquantum} is defined as follows:
\begin{definition}[Worst-case PGM]
    Suppose we are discriminating between pure states $\{\ket{v_1}, \dots, \ket{v_m}\}$, then the worst-case PGM is defined as the POVM $\{E_i\}_i$ with measurement operators
\[E_i = S^{-1/2}\ketbra{v_i}S^{-1/2}\]
where
\[S = \sum_{i=1}^m \ketbra{v_i}.\]
\end{definition}

With this, we denote the success probability of the worst-case PGM, given a single copy of our state $\rho = \ketbra{v_k}$, as $P_\mathrm{PGM}$. From analyzing the Gram matrix of the states, Montanaro~\cite{montanaro2019prettysimpleboundsquantum} shows the following lower bound on this success probability:
\begin{equation}
    P_\mathrm{PGM} \geq 1 - mF
\end{equation}
where $F = \max_{i \neq j} \vert \braket{v_i}{v_j} \vert$.

\subsection{Sequential Measurement Algorithm}\label{seqmeasalg}
As introduced by Wilde~\cite{Wilde2013Sequential}, the sequential measurement algorithm (SMA) for pure states is defined as follows:
\begin{definition}[SMA]
    Given $m$ pure states $\ket{v_1},\ldots, \ket{v_m}$, we first define $$\Pi_i = \ketbra{v_i} \quad \text{and } \quad \overline{\Pi}_i = \mathbb{I} - \Pi_i  \quad  \forall i \in \{1, \ldots, m \}.$$ Now, given the state we want to discriminate, $\rho =\ketbra{v_k}$, measure it with $\{ \Pi_1, \overline{\Pi}_1 \}, \ldots, \{ \Pi_m, \overline{\Pi}_m \}$ sequentially. At any point, if we measure and get a $\Pi_i$ instead of a $\overline{\Pi}_i$ we return $\ket{v_i}$ as our guess. If we never measure a $\Pi_i$ after performing all $m$ measurements, then we pick one of the $m$ states uniformly at random and return it.
\end{definition}
We will call the success probability of this algorithm $P_\mathrm{SM}$, and we will break it down into two possibilities. If we are given the initial state we want to discriminate $\rho = \ketbra{v_k}$, then these parts are:
\begin{enumerate}
    \item $\Pr[\mathrm{Good}_1] := \Pr[\text{we see the sequence  }\overline{\Pi}_1 \cdots \overline{\Pi}_{k-1} \Pi_k]$
    \item $\Pr[\mathrm{Good}_2] := \Pr[\text{we see the sequence  $\overline{\Pi}_1 \cdots \overline{\Pi}_m$ and then randomly guess $\ketbra{v_k}$}]$
\end{enumerate}
With these, the success probability can be written as
$$ \Pr[\text{success}] = \Pr[\mathrm{Good}_1] + \Pr[\mathrm{Good}_2] $$

First, observe that
\[\Pr[\mathrm{Good}_2] = \frac{1}{m} \Tr(\prod_{i=1}^m \overline{\Pi}_i \ketbra{v_k}).\] 
Then, to lower bound $\Pr[\mathrm{Good}_1]$ we first need to define
\begin{align*}
    \epsilon_i &= \Tr(\Pi_i \ketbra{v_k}) = |\braket{v_i}{v_k}|^2 \leq F, \text{ for all $1\leq i \leq k-1$,}\\
    \epsilon_k &= \Tr(\overline{\Pi}_k \ketbra{v_k}) = 0
\end{align*}
Now, the Quantum Union Bound tells us that the probability that $\Pr[\mathrm{Good}_1]$ occurs is:
$$ \Pr[\mathrm{Good}_1] = 1 - \Pr[\mathrm{Bad}_1] \leq 1 - 4\sum_{i=1}^{k-1} \epsilon_i \leq 1 - 4(m-1)F^2$$
Finally, putting everything together, the probability of success of the sequential measurement algorithm is lower bounded by: 
\begin{equation}\label{eq:SM_lower_bound}
    P_\mathrm{SM} \geq 1 - 4(m-1)F^2 + \frac{1}{m} \Tr(\prod_{i=1}^m \overline{\Pi}_i \ketbra{v_k}).
\end{equation}

\section{Bounds on Worst-Case PGM}

In this section, we prove Theorem~\ref{worstcasePGM} improves upon the known lower bound of the worst-case PGM success probability. The proof follows the structure of the proof that $P_\mathrm{PGM}^\mathrm{avg} \geq P_\mathrm{OPT}^2$ presented by Barnum and Knill~\cite{barnum2000reversingquantumdynamicsnearoptimal}. Our proof is an adaptation of the argument shown there, applied to the worst-case setting.

\begin{theorem}\label{worstcasePGM}
    The success probability of the PGM for the worst-case discrimination problem using one copy of $\rho = \ketbra{v_i}$ is at least $\frac{(1-4(m-1)F^2)^2}{1+mF^2} = 1 - O(F^2)$.
\end{theorem}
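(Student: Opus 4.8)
The plan is to adapt the Barnum--Knill comparison argument to the pointwise (worst-case) setting, using the SMA as the competing measurement. Fix the true index $k$ and write the PGM success probability as $P_\mathrm{PGM} = \langle v_k\vert E_k\vert v_k\rangle = (\langle v_k\vert S^{-1/2}\vert v_k\rangle)^2$, which is valid since $S^{-1/2}$ is positive. I would record this as $P_\mathrm{PGM} = \|A\|_2^2$ in the Hilbert--Schmidt norm, where $A = S^{-1/4}\ketbra{v_k}S^{-1/4}$ is rank one and $\Tr(A^2) = (\langle v_k\vert S^{-1/2}\vert v_k\rangle)^2$.

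Next I would introduce the competing measurement. Let $M_k$ be the POVM element for the SMA returning the guess $k$, i.e. the operator assembled from the Kraus map of the sequence $\overline{\Pi}_1\cdots\overline{\Pi}_{k-1}\Pi_k$ together with the $1/m$ share of the all-$\overline{\Pi}$ branch, so that $\Tr(M_k\ketbra{v_k}) = P_\mathrm{SM}$. The key step is a Cauchy--Schwarz bound: inserting $S^{1/4}S^{-1/4}$ and using cyclicity of the trace,
\[ P_\mathrm{SM} = \Tr\!\big((S^{1/4}M_kS^{1/4})(S^{-1/4}\ketbra{v_k}S^{-1/4})\big) = \Tr(BA), \qquad B = S^{1/4}M_kS^{1/4}\succeq 0. \]
Cauchy--Schwarz for the Hilbert--Schmidt inner product gives $P_\mathrm{SM} \le \|A\|_2\|B\|_2$, and since $\|A\|_2^2 = P_\mathrm{PGM}$ this rearranges to
\[ P_\mathrm{PGM} \ge \frac{P_\mathrm{SM}^2}{\|B\|_2^2} = \frac{P_\mathrm{SM}^2}{\Tr(M_kS^{1/2}M_kS^{1/2})}. \]

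For the numerator I would substitute the SMA lower bound \eqref{eq:SM_lower_bound}, drop the nonnegative $\mathrm{Good}_2$ term, and restrict to the low-fidelity regime $4(m-1)F^2 \le 1$ so that squaring the bound $P_\mathrm{SM} \ge 1-4(m-1)F^2 \ge 0$ is legitimate; this yields numerator $\ge (1-4(m-1)F^2)^2$. For the denominator I would first pass to the cleaner quantity $\Tr(M_kS)$: since $0\preceq M_k\preceq\I$ gives $0\preceq N\preceq S^{1/2}$ for $N = S^{1/4}M_kS^{1/4}$, we obtain $\Tr(M_kS^{1/2}M_kS^{1/2}) = \Tr(N^2) \le \Tr(NS^{1/2}) = \Tr(M_kS)$. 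Writing $\Tr(M_kS) = \sum_{j}\langle v_j\vert M_k\vert v_j\rangle = P_\mathrm{SM} + \sum_{j\neq k}\langle v_j\vert M_k\vert v_j\rangle \le 1 + mF^2$ then closes the argument, and combining the two estimates reproduces exactly $(1-4(m-1)F^2)^2/(1+mF^2)$.

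The main obstacle is this final denominator estimate, namely controlling the off-diagonal ``wrong-output'' weights $\langle v_j\vert M_k\vert v_j\rangle$ for $j\neq k$, i.e. the probability that the SMA, fed $\ket{v_j}$, nonetheless reports $k$. Bounding their sum by $mF^2$ requires unpacking the explicit POVM element $M_k = \overline{\Pi}_1\cdots\overline{\Pi}_{k-1}\Pi_k\overline{\Pi}_{k-1}\cdots\overline{\Pi}_1 + (\text{guess branch})$ and showing each such weight is $O(F^2)$ via the pairwise-overlap bound $\vert\braket{v_i}{v_j}\vert^2 \le F$ together with gentle-measurement/telescoping estimates on the products of the $\overline{\Pi}_i$. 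I expect this bookkeeping, rather than the clean Cauchy--Schwarz reduction, to be where the real work lies.
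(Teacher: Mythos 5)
Your reduction is, up to notation, exactly the paper's: fix the true index $k$, set $A = S^{-1/4}\ketbra{v_k}S^{-1/4}$ and $B = S^{1/4}M_kS^{1/4}$, observe $\Tr(A^2) = \bra{v_k}E_k\ket{v_k}$, apply Hilbert--Schmidt Cauchy--Schwarz to $P_\mathrm{SM} \le \Tr(BA)$, and reduce $\Tr(B^2) \le \Tr(M_k S)$ using $0 \preceq M_k \preceq \I$ (your justification of that last reduction via $N \preceq S^{1/2}$ is, if anything, tidier than the paper's). The genuine gap is the step you defer to the end: the estimate $\Tr(M_k S) \le 1 + mF^2$. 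You propose to expand $\Tr(M_k S) = \sum_j \bra{v_j}M_k\ket{v_j}$ and show each ``wrong-output'' weight with $j \neq k$ is $O(F^2)$. As formulated this fails at leading order, not in the bookkeeping: under the normalization you invoke ($|\braket{v_i}{v_j}|^2 \le F$), already for $m=2$, $k=1$, $j=2$ one has $\bra{v_2}M_1\ket{v_2} = \bra{v_2}\Pi_1\ket{v_2} = |\braket{v_1}{v_2}|^2$, which can equal $F$, and $F > mF^2$ whenever $F < 1/m$. In general every $j$ whose test comes after $k$'s has weight $|\bra{v_k}\overline{\Pi}_{k-1}\cdots\overline{\Pi}_1\ket{v_j}|^2$ with leading term $|\braket{v_k}{v_j}|^2$ --- the size of a single pairwise overlap, which no gentle-measurement argument can suppress. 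Even under the convention that $F$ bounds the overlap rather than its square (which is what the theorem's numerology actually requires), the cross terms from expanding the projector products, and especially the all-fail branch you folded into $M_k$ --- whose contribution $\frac{1}{m}\sum_j\bra{v_j}M_{m+1}\ket{v_j}$ is itself of order $4(m-1)F^2$ --- push the total strictly above $1+mF^2$, so the stated constant is lost.

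The idea you are missing is that the wrong-output weights never need to be controlled at all. The paper keeps only the product part $M_k = \overline{\Pi}_1\cdots\overline{\Pi}_{k-1}\,\Pi_k\,\overline{\Pi}_{k-1}\cdots\overline{\Pi}_1$ in the Cauchy--Schwarz step; the $\frac{1}{m}$ all-fail branch is not needed in $M_k$, because the union-bound argument already gives $\Tr(M_k\ketbra{v_k}) = \Pr[\mathrm{Good}_1] \ge 1-4(m-1)F^2$ for the product part alone (equivalently, the $\mathrm{Good}_2$ term cancels between the two sides). It then exploits the sandwich structure $M_k = L\ketbra{v_k}L^\dagger$ with $L = \overline{\Pi}_1\cdots\overline{\Pi}_{k-1}$: discarding the contraction factors of $L$ inside the trace collapses $\Tr(S M_k)$ onto $\bra{v_k}S\ket{v_k} = 1 + \sum_{j\neq k}|\braket{v_j}{v_k}|^2 \le 1 + mF^2$, an exact identity involving only the (squared) overlaps of the other states with the true state --- no telescoping, no combinatorial expansion, no cross-discrimination probabilities. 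So the part of your write-up you flagged as ``where the real work lies'' is precisely where the paper's proof has essentially no work, and the route you chose for it cannot deliver the claimed denominator.
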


\begin{proof}
    The Sequential Measurement Algorithm is given by measurement operators $\{ M_1, \dots, M_{m+1} \}$, where 
    \begin{align*}
        M_1 &= \Pi_1 \\
        M_{j} &= \prod_{i=1}^{j-1} (\mathbb{I} - \Pi_i)\Pi_j \prod_{i=1}^{j-1} (\mathbb{I} - \Pi_{j-i}) \ \ \ \text{for $2\leq j \leq m$}\\
        M_{m+1} &= \prod_{i=1}^m (\mathbb{I} - \Pi_i)
    \end{align*}

    Here, the $M_{m+1}$ term refers to the case where all sequential measurements fail, so we guess randomly. To verify the correctness of these $M_i$'s, given a state we want to discriminate $\ket{\psi} = \ket{v_k}$, if we output the guess ``$\ket{v_t}$" from the Sequential Measurement Algorithm, that means that the current state is
    \[\Pi_t \cdot (\mathbb{I} - \Pi_{t-1}) \cdots (\mathbb{I} - \Pi_1)\ket{v_k}.\]
    This happens with probability $$||\Pi_t \cdot (\mathbb{I} - \Pi_{t-1}) \cdots (\mathbb{I} - \Pi_1)\ket{v_k}||^2 = \bra{v_k} (\mathbb{I} - \Pi_1) \cdots (\mathbb{I} - \Pi_{t-1}) \cdot \Pi_t \cdot (\mathbb{I} - \Pi_{t-1}) \cdots (\mathbb{I} - \Pi_1)\ket{v_k}.$$
    Equivalently, if we are given the initial state $\ket{\psi} =\ket{v_k}$ and measure it with the POVM as described above, then the probability of seeing output $M_t$ (in which case we return "$\ket{v_t}$" as our guess) is 
    \begin{align*}
        \Tr(M_t \ketbra{v_k}) &= \bra{v_k} M_t \ket{v_k} \\
        &= \bra{v_k} (\mathbb{I} - \Pi_1) \cdots (\mathbb{I} - \Pi_{t-1}) \cdot \Pi_t \cdot (\mathbb{I} - \Pi_{t-1}) \cdots (\mathbb{I} - \Pi_1)\ket{v_k}.
    \end{align*}
    Therefore, these $M_i$'s are indeed the correct operators for the Sequential Measurement Algorithm.

    Using this POVM notation, we can write the worst case success probability for the sequential measurement algorithm as follows
    \begin{align*}
        P_\mathrm{SM} &= \min_i \Pr[\text{guess $i$} \mid i] \\
        &=\min_i \Pr[M_i \mid \ketbra{v_i}] + \frac{1}{m} \Pr[M_{m+1} \mid \ketbra{v_i}]\\
        &= \min_i \Tr(M_i \ketbra{v_i}) + \frac{1}{m} \Tr(M_{m+1} \ketbra{v_i}) \\
        &= \min_i \Tr(S^{1/4} M_i S^{1/4} S^{-1/4}\ketbra{v_i} S^{-1/4}) + \frac{1}{m} \Tr(M_{m+1} \ketbra{v_i})
    \end{align*}
    Combining this with eq. \ref{eq:SM_lower_bound} gives us that
    \begin{equation} \label{eq:intermediate_step}
        1- 4(m-1)F^2 \leq \min_i \Tr(S^{1/4} M_i S^{1/4} S^{-1/4}\ketbra{v_i} S^{-1/4})
    \end{equation}
    For simplicity, we will define $P_\mathrm{SM}' := 1- 4(m-1)F^2$, $A_i := S^{1/4} M_i S^{1/4}$, and $B_i := S^{-1/4}\ketbra{v_i} S^{-1/4}$. Now, by applying Cauchy-Schwarz and these new definitions to eq. \ref{eq:intermediate_step} gives:
    \begin{equation} \label{eq:CS_applied_to_SM_bound}
        P_\mathrm{SM}' \leq \min_i \Tr(A_i \cdot B_i) \leq \min_i \sqrt{\Tr(A_i^2)} \sqrt{\Tr(B_i^2)}. 
    \end{equation}
    We can now analyze each term of the equation. \ref{eq:CS_applied_to_SM_bound} individually. Starting with $A_i$ term:
    \begin{align*}
        \Tr(A_i^2) &= \Tr(S^{1/4} M_i S^{1/4} \cdot S^{1/4} M_i S^{1/4}) \\
        &= \Tr(S^{1/2}M_i S^{1/2} M_i) 
    \end{align*}
    Now, each $M_i \leq \mathbb{I}$ by definition, so
    \begin{align*}
        \Tr(A_i^2) \leq \Tr(S^{1/2}M_i S^{1/2}) = \Tr(S \cdot M_i).
    \end{align*}
    We know that $M_i = \prod_{j=1}^{i-1} (\mathbb{I} - \ketbra{v_j}) \cdot \ketbra{v_i} \cdot \prod_{j=1}^{i-1} (\mathbb{I} - \ketbra{v_{i-j}}) $, so by plugging this back in we get
    \begin{align*}
        \Tr(A_i^2) &\leq \Tr(S \cdot  \prod_{j=1}^{i-1} (\mathbb{I} - \ketbra{v_j}) \cdot \ketbra{v_i} \cdot \prod_{j=1}^{i-1} (\mathbb{I} - \ketbra{v_{i-j}}))\\
        &\leq \Tr(S \cdot  \prod_{j=1}^{i-1} (\mathbb{I} - \ketbra{v_j}) \cdot \ketbra{v_i}) \\
        &= \Tr(\ketbra{v_i} S \cdot  \prod_{j=1}^{i-1} (\mathbb{I} - \ketbra{v_j}))\\
        &\leq \Tr(\ketbra{v_i} S)
    \end{align*}
    Here we used the cyclic property of the trace and the fact that $\mathbb{I} - \ketbra{v_j} \leq \mathbb{I}$. So, the trace becomes
    \begin{align*}
        \Tr(A_i^2) &\leq \Tr(\ketbra{v_i} S) \\
        &= \bra{v_i} S\ket{v_i}\\
        &= \bra{v_i} \sum_{j=1}^m \ketbra{v_j} \ket{v_i} \\
        &= 1 + \sum_{j \neq i} |\braket{v_i}{v_j}|^2 \\
        &\leq 1 + mF^2
    \end{align*}
    Now, the bound on $P_\mathrm{SM}'$ becomes
    $$ P_\mathrm{SM}' \leq \min_i \sqrt{1+mF^2} \sqrt{\Tr(B_i^2)} = \sqrt{1+mF^2} \cdot \sqrt{\min_i\Tr(B_i^2)} $$
    Analyzing $\min_i \Tr(B_i^2)$ we find 
    \begin{align*}
        \min_i \Tr(B_i^2) &= \min_i \Tr(S^{-1/4}\ketbra{v_i} S^{-1/4} \cdot S^{-1/4}\ketbra{v_i} S^{-1/4}) \\
        &= \min_i \Tr(S^{-1/2} \ketbra{v_i} S^{-1/2} \ketbra{v_i}) \\
        &= \min \Tr(M_i^{PGM} \ketbra{v_i}) \\
        &= P_\mathrm{PGM}
    \end{align*}
    Putting everything together 
    \begin{align*}
        &P_\mathrm{SM}' \leq \sqrt{1+mF^2} \sqrt{P_\mathrm{PGM}}\\
        \implies & P_\mathrm{PGM} \geq \frac{(P_\mathrm{SM}')^2}{1+mF^2} = \frac{(1-4(m-1)F^2)^2}{1+mF^2}\qedhere
    \end{align*}
\end{proof}
This updated worst-case PGM lower bound outperforms the lower bound by Montanaro~\cite{montanaro2019prettysimpleboundsquantum} of $1-mF$, for the setting where $m\geq 4$. The full proof for this can be seen in App. \ref{sec:proof_of_improved_bound} or empirical results are shown in Fig. \ref{fig:improved_bound}.

\begin{figure}[h]
    \centering
    \includegraphics[width=0.8\linewidth]{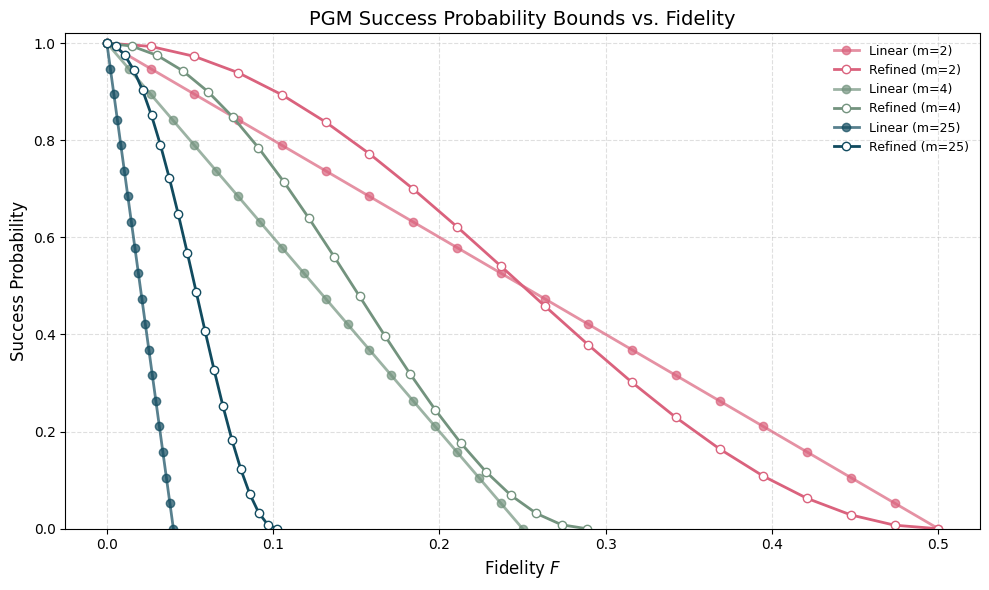}
    \caption{Plot of PGM success probabilities as a function of $F$ for multiple $m$ values. The linear terms are given by $1 - mF$ and the refined terms are given by $\frac{\left(1-4\left(m-1\right)F^2\right)^2}{1+mF^2}$. For $m\geq 4$, we see that our refined bound always outperforms the linear bound (as proved explicitly in App. \ref{sec:proof_of_improved_bound}).}
    \label{fig:improved_bound}
\end{figure}

\section{Conclusion}
In this paper, we prove an improved lower bound on the success probability of the Pretty Good Measurement (PGM) algorithm in the worst-case setting, when the number of states being discriminated is $\geq 4$. We did this by adapting the Barnum and Knill~\cite{barnum2000reversingquantumdynamicsnearoptimal} proof technique and comparing the PGM directly to the sequential measurement algorithm. We improve on previous results by showing that the success probability falls off quadratically rather than linearly in the low-fidelity regime.

Going forward, repeating this Barnum-Knill-esque argument of Theorem~\ref{worstcasePGM} with other measurement algorithms that do better than the sequential measurement algorithm (or even possibly getting a lower bound in terms of the optimal success probability) is of much interest. This could further improve the lower bound on the worst-case PGM. Additionally, defining and proving success bounds on a PGM for the unambiguous state discrimination problem is another avenue worth exploring. 

\newpage

%
%

\bibliographystyle{plain} 
\bibliography{biblio} 
\section*{Acknowledgements}
We thank John Wright for his helpful suggestions on this work.
\appendix
\section{Proof of  improved bound} \label{sec:proof_of_improved_bound}
In this section, we will prove that for $F>0$:
\begin{equation} \label{eq:inequality_for_improved_bound}
    \frac{\left(1-4\left(m-1\right)F^2\right)^2}{1+mF^2} > 1-mF, \quad \forall m\geq 4.
\end{equation}
We make the assumption that $F>0$, since if $F=0$, $P_{PGM} =1$ trivially.

To prove eq. \ref{eq:inequality_for_improved_bound}, we will equivalently show that the following function is greater than zero:
\begin{equation}
    g(F,m) = (1-4\left(m-1\right)F^2)^2 - (1+mF^2) \cdot (1-mF)
\end{equation}
Expanding $g(F,m)$, we get that
\begin{align*}
    g(F,m) &= \left( 1 - 8(m-1)F^2 + 16(m-1)^2F^4 \right) - \left( 1-mF + mF^2 - m^2F^3 \right) \\
    &= F\left( 16(m-1)^2F^3 +m^2F^2 -8(m-1)F - mF +m \right)
\end{align*}
Since we only want to show that $g(F,m)>0$, and we assume that $F>0$, we can equivalently show that $16(m-1)^2F^3 +m^2F^2 -8(m-1)F - mF +m > 0$. To do this, we define another function $h(F,m)$:
\begin{equation}
    h(F,m) = 16(m-1)^2F^3 +m^2F^2 -8(m-1)F - mF +m
\end{equation}
We will prove in two steps that $\forall F\in (0,1]$ and $m\geq 4$, that $h(F,m) >0$. First we will show that that $h(F,4) >0$, then we will show that $h(F,m> 4) \geq h(F,4)$. Combining these completes the proof by giving us
$$ h(F,m> 4) \geq h(F,4) > 0 \quad \forall F\in (0,1]. $$
\subsection{Proof $h(F,4) >0$}
Plugging in $m=4$ to $h()$ gives:
\begin{equation}
    h(F,4) =  144F^3 + 16F^2-28F+4.
\end{equation}
$h(F,4)$ only has one local extrema in $F\in (0,1]$. To find this, we set the derivative of $h$ with respect to $F$ equal to $0$.
\begin{align*}
    &\frac{d}{dF}h(F,4) = 432F^2+32F-28 = 0\\
    \implies& F = \frac{-32\pm \sqrt{32^2 + 4\cdot 144\cdot 28}}{2\cdot 432} \\
    \implies& F_{+} \approx 0.22 \in (0,1]
\end{align*}
The other solution is clearly outside of the domain as it is negative. By taking the second derivative, we get
$$ h''(F,4) = 864F+32 > 0 \quad \forall F\in (0,1] $$
So since $h''(F,4) >0$, the point $F_+$ is a local minimum, and since $h$ is continuous, $F_+$ is a global minimum in the domain $F\in (0,1]$. Plugging this point into $h$ gives us that
$$ h(F_+, 4) \approx 0.148 > 0. $$
Since $h(F_+, 4)$ is a global minimum in the range $F\in (0,1]$, for every $F\in (0,1]$, $h(F,4) \geq h(F_+, 4)>0$.
\subsection{Proof $h(F, m>4) \geq h(F,4)$}
To prove this, we will use the fact that if a function is continuous on $[a,b]$ and differentiable on $(a,b)$, and its derivative is greater than zero on $(a,b)$, then the function is strictly increasing on $[a,b]$ (by the Mean Value Theorem). Our function $h(\cdot)$ is continuous and differentiable everywhere, and since we are concerned with $m \in [4,\infty)$, it suffices to take any $m>4$ and apply this fact to the finite interval $[4,m]$. Thus, it is enough to show that
$$
\frac{\partial}{\partial m} h(F,m) > 0 
\quad\text{for all } m>4 \text{ and } F \in (0,1],
$$
which will imply
$$
h(F,m) \ge h(F,4) \quad \text{for all } m \ge 4.
$$

To start, 
\begin{align*}
    \frac{d}{dm}h(F, m) &= 32F^3(m-1) + 2mF^2-9F+1 \\
    &\geq 32F^3(4-1) + 2\cdot 4\cdot F^2-9F+1 &\text{(since $m\geq 4$)}\\
    &= 96F^3+8F^2-9F+1
\end{align*}
To show that this is greater than $0$, we first let $p(F) = 96F^3+8F^2-9F+1$. Taking the second derivative, it is clear to see that $p''(F)>0$ $\forall F\in (0,1]$. With this, we know that the local minimum in the range $F\in (0,1]$ must also be a global minimum in the same range.
\begin{align*}
    p'(F) = 288F^2 + 16F-9
\end{align*}
Setting $p'(F) = 0$, gives us that the only local extrema for $F\in (0,1]$ is
$$ F_+ = \frac{-16 + \sqrt{16^2+4\cdot 288 \cdot 9}}{2\cdot 288} \approx 0.151 $$
With this, $p(F_+) \approx 0.154 > 0$. So, for any choice of $F$ we have that $\frac{d}{dm} h(F,m) >0$. Therefore, $h(F,m)$ is increasing on the interval $[4, \infty)$ making $h(F, m>4) \geq h(F,4)$. We actually show a stronger argument which is for all $m\in [4,\infty)$ and any $\eps > 0$ that $h(F,m+\eps)>h(F,m)$.

\end{document}